\documentclass[11pt]{article}
\usepackage[T1]{fontenc}
\usepackage{lmodern}
\usepackage{amsmath,amssymb,amsthm,mathtools}
\usepackage{amscd}
\usepackage[margin=1in]{geometry}
\usepackage{microtype}
\usepackage[numbers,square]{natbib}
\usepackage[colorlinks=true,linkcolor=blue,citecolor=blue,urlcolor=blue]{hyperref}
\hypersetup{pdftitle={Type classifiers},
 pdfauthor={Martin E. Bidlingmaier}}
\newtheorem{theorem}{Theorem}
\newtheorem{proposition}[theorem]{Proposition}

\theoremstyle{definition}
\newtheorem{definition}[theorem]{Definition}
\theoremstyle{remark}

\newcommand{\C}{\mathcal{C}}
\newcommand{\D}{\mathcal{D}}
\newcommand{\Ty}{\operatorname{Ty}}
\newcommand{\Tm}{\operatorname{Tm}}
\newcommand{\id}{\mathrm{id}}

\newcommand{\Mod}{\mathbf{Cwf}_{\mathbb T}}
\newcommand{\Cxl}{\mathbf{Cxl}_{\mathbb T}}
\newcommand{\cs}[2]{#2\mathbin{/\!/}#1}

\newcommand{\ap}{\mathsf{ap}}
\newcommand{\doi}[1]{\href{https://doi.org/#1}{\nolinkurl{doi:#1}}}
\title{Type classifiers}
\author{Martin E. Bidlingmaier}
\date{}
\begin{document}
\maketitle

\begin{abstract}
We extend dependent type theory by \emph{type classifiers} and construct an interpretation in the category of contextual categories with families.
Type classifiers extend contexts by type variables, complementing the usual context extensions by term variables of specified types.
This allows statements with arbitrary type parameters to be formulated without assuming a universe.

We show that substituting isomorphic types induces isomorphisms between the resulting types even in extensional type theory, giving an analogue of the invariance supplied by the univalence axiom in homotopy type theory.
Omitting function types from the theory, so that all type formers are covariant, extends this naturality to arbitrary, possibly noninvertible maps.
\end{abstract}

\section{Introduction}

Many mathematical assertions involve a free type parameter.
A theorem about \emph{all} groups, for example, must quantify over all possible carrier types.
In dependent type theory, such quantification is usually expressed using universe types, whose elements encode types.
This approach works well in practice, but it has limitations.

Girard's paradox rules out a universe that contains itself.
We can thus either restrict the theorem to small types, those classified by a chosen universe, or introduce an infinite hierarchy $U_0\in U_1\in\cdots\in U_\ell\in\cdots$ and use universe polymorphism to quantify over its levels $\ell$.
But universe polymorphism only shifts the quantification problem, since types quantifying over the whole indexed hierarchy need not belong to any of its universes.
Such types can be interpreted in a larger universe $U_\infty$, but this universe then lies outside the hierarchy over which we quantified.
The quantification still fails to cover all types of the enlarged theory.

A further drawback is that universe closure axioms increase the proof-theoretic strength of the theory~\citep[Section 4]{Palmgren}, often unnecessarily.
For example, when elementary results about groups are formulated using universes, the existence of those universes must be assumed even though they are not needed for the proofs.
The resulting formalization cannot be interpreted directly in small models without those universes, even when the arguments could be carried out there.

We introduce type classifiers for extending contexts by free type parameters, alongside the usual extensions by term parameters.
Their universal property permits substitution by any closed type.
We thus obtain a context classifying all types, rather than a type classifying all types.
A theorem about all groups, for example, is represented by a type in a context containing a free type variable for the carrier type and parameters for the group structure.
It can be specialized by substitution to any particular group.
The theorem is formulated in this extended context, rather than as a dependent product over a universe in the empty context.

The notion of a type classifier is analogous to that of a relative object classifier in topos theory.
Such a classifier exists over every Grothendieck topos, with the universal property that geometric morphisms into it over the base correspond to objects of the domain topos.
In particular, sections correspond to objects of the base topos.
This similarly permits reasoning about arbitrary objects without requiring an internal universe object classifying them.

The main contribution of this article is a model of dependent type theory interpreting type classifiers.
Standard models are typically democratic, meaning that every context is represented by a closed type.
A context classifying all types would then yield a universe classifying all types, including itself, which is ruled out by Girard's paradox.
We use a simplified version of the big model construction for locally cartesian closed (lcc) categories introduced in~\citep{Bidlingmaier}.
Contexts in big models are themselves small models of type theory, in our case contextual categories with families (cwfs).
A type or term at a context $\C$ is a closed type or term of the cwf $\C$, and covariant substitutions are functors that preserve the cwf structure and type and term formers.

In the big model construction based on lcc categories, considerable effort is needed to obtain well-behaved context extensions by term variables.
Working with contextual cwfs simplifies this step: By a result of \citet*[Lemma 25]{BKS}, freely adjoining a term of a given closed type yields the contextual coslice, with a universal property for strict morphisms of cwfs.
Type classifiers are then obtained as coproducts with the free contextual cwf on one closed type.

As a category of categories, the big model also has 2-categorical structure: its substitutions are functors, and we can therefore consider natural transformations between substitutions.
We show that substituting isomorphic types into a type classifier gives naturally isomorphic substitutions, inducing isomorphisms between the resulting types.

This naturality can itself be expressed in the type theory by additional operators and equations.
The resulting extension gives an analogue of the invariance supplied by the univalence axiom in homotopy type theory, even for extensional type theory.
Omitting function types from the theory, so that all type formers are covariant, extends the naturality to arbitrary, possibly noninvertible maps.

\section{Preliminaries}

A \emph{covariant cwf} on a category $\C$ is an ordinary cwf structure~\citep[Section 2]{BCDE} on $\C^{\mathrm{op}}$, as in~\citep[Definition 14]{Bidlingmaier}.\footnote{
In the big model, substitution is given by an interpretation functor acting on types and terms.
We use covariant cwfs so that these functors themselves are the context morphisms instead of their formal opposites.
}
Thus $\C$ has a chosen initial \emph{empty context} $0_\C$, sets $\Ty_\C(\Gamma)$ of \emph{types}, and sets $\Tm_\C(\Gamma,\sigma)$ of \emph{terms}.
We write \emph{substitution} as function application: for $f:\Gamma\to\Delta$, the substituted type and term are $f(\sigma)\in\Ty_\C(\Delta)$ and $f(a)\in\Tm_\C(\Delta,f(\sigma))$.
Functoriality of substitution means that $g(f(\sigma))=(gf)(\sigma)$ and $\id_\Gamma(\sigma)=\sigma$, and similarly for terms.

We call ordinary context extensions \emph{term classifiers} to distinguish them from extensions by type parameters, as provided by the type classifiers introduced in Section~\ref{sec:type-classifiers}.
For $\sigma\in\Ty_\C(\Gamma)$, a term classifier consists of a context $\Gamma.\sigma$, a morphism and a generic term
\[
 p:\Gamma\longrightarrow\Gamma.\sigma,\qquad
 v\in\Tm_\C(\Gamma.\sigma,p(\sigma)).
\]
Its universal property is that, for $f:\Gamma\to\Delta$ and $a\in\Tm_\C(\Delta,f(\sigma))$, there is a unique morphism $\langle f,a\rangle:\Gamma.\sigma\to\Delta$ such that $\langle f,a\rangle p=f$ and $\langle f,a\rangle(v)=a$.

Let $\mathbb T$ be a dependent type theory presented as an essentially algebraic extension of cwfs by substitution-stable type and term formers.
Examples include theories with intensional or extensional identity types, dependent sums and products, unit types, and natural numbers.
Write $\Mod$ for the category of small covariant cwfs with this structure and \emph{strict morphisms}: Functors equipped with compatible actions on types and terms, preserving the chosen empty context, term classifiers, and type and term formers up to equality.

A \emph{telescope} over $\Gamma$ is a finite list of successive types, with evaluation $\Gamma.T$ and composite classifier map $p_T:\Gamma\to\Gamma.T$.
A cwf is \emph{contextual} if each context has a unique telescope presentation over $0_\C$.
Write $\Cxl$ for the full subcategory of contextual models.

We recall the contextual slice construction of \citet*[Appendix A.2]{BKS}, reversing all context arrows to follow our convention.
For any cwf $\C$ and context $\Gamma$ of $\C$, the \emph{contextual coslice} $\cs \C\Gamma$ is the contextual cwf whose contexts are telescopes over $\Gamma$ and whose morphisms $T\to S$ are maps $h:\Gamma.T\to\Gamma.S$ satisfying $h p_T=p_S$.
Types and terms at $T$, together with their formers, are those of $\C$ at $\Gamma.T$.
The empty telescope is initial, and a term classifier is obtained by appending the given type to the telescope.

For $f:\Gamma\to\Delta$, \emph{transport} $f_*:\cs \C\Gamma\to\cs \C\Delta$ is defined by iterating the pushout squares determined by term classifiers along a telescope.
The \emph{comparison map} at $T$ is denoted by $f_T:\Gamma.T\to\Delta.f_*(T)$.
The image of a telescope morphism is the induced map between these pushouts, while transport of types and terms is given by the action of $f_T$ in $\C$.
A strict morphism $F:\C\to \D$ induces a strict morphism $F_\Gamma:\cs \C\Gamma\to\cs \D{F(\Gamma)}$.
The assignment $f\mapsto f_*$ is functorial~\citep[after Definition 22]{BKS}:
\[
 (gf)_*=g_*f_*,\qquad (\id)_*=\mathrm{Id}.
\]
Transport is compatible with strict morphisms:
\[
 \begin{CD}
 \cs \C\Gamma @>{f_*}>> \cs \C\Delta \\
 @V{F_\Gamma}VV @VV{F_\Delta}V \\
 \cs \D{F(\Gamma)} @>{(F(f))_*}>> \cs \D{F(\Delta)}
 \end{CD}
\]
\emph{Evaluation} $e_\C:\cs \C{0_\C}\to \C$ is strict, and every strict morphism from a contextual model into $\C$ factors uniquely through it.
If $\C$ is contextual, $e_\C$ is a strict isomorphism~\citep[Section 2.3 and Corollary 24]{BKS}.
The coslice of $\cs \C\Gamma$ at $T$ is strictly isomorphic to $\cs \C{\Gamma.T}$ by concatenation of telescopes~\citep[proof of Lemma 26]{BKS}.

For a contextual $\C$ and a closed type $\sigma$, define
\[
 \C.\sigma=\cs \C{(0_\C.\sigma)}.
\]
The map $p:\C\to\C.\sigma$ is given by transport along the term-classifier map $0_\C\to0_\C.\sigma$, composed with $e_\C^{-1}$.
The generic term $v$ is a closed term of type $p(\sigma)$ in $\C.\sigma$.
The following is the universal property proved in~\citep[Lemma 25]{BKS}, expressed for covariant cwfs.

\begin{proposition}\label{prop:term}
For every $\D\in\Mod$ there is a natural bijection
\begin{equation}\label{eq:term}
 \Mod(\C.\sigma,\D)\cong
 \coprod_{F:\,\C\to \D}\Tm_\D(0_\D,F(\sigma)),
 \qquad H\longmapsto(Hp,H(v)).
\end{equation}
\end{proposition}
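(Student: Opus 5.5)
We will construct an explicit inverse to $H\mapsto(Hp,H(v))$ and then check that the two composites are identities, using only the transport and evaluation facts recalled above. Throughout, write $\pi:0_\C\to0_\C.\sigma$ for the term-classifier map, so that $p=\pi_*\circ e_\C^{-1}$ and $v$ is the variable of $0_\C.\sigma$, regarded as a closed term of $\cs\C{(0_\C.\sigma)}$. Naturality in $\D$ holds by inspection: postcomposing $H$ with a strict $G:\D\to\D'$ sends $(Hp,H(v))$ to $(GHp,GH(v))$.

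\emph{Construction of the inverse.} Let $F:\C\to\D$ be strict and $a\in\Tm_\D(0_\D,F(\sigma))$. Since $F$ preserves the empty context and term classifiers, $F(0_\C.\sigma)=0_\D.F(\sigma)$. The universal property of the term classifier in $\D$ gives a unique $\langle\id,a\rangle:0_\D.F(\sigma)\to0_\D$ with $\langle\id,a\rangle F(\pi)=\id$ and $\langle\id,a\rangle(v)=a$. We then set
\[
 \Phi(F,a)=e_\D\circ\langle\id,a\rangle_*\circ F_{0_\C.\sigma}:\ \C.\sigma\to\cs\D{(0_\D.F(\sigma))}\to\cs\D{0_\D}\to\D .
\]
Each factor is strict, so $\Phi(F,a)$ is a strict morphism.

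\emph{First composite.} To compute $\Phi(F,a)\,p$, we use the commuting square for transport and strict morphisms, which gives $F_{0.\sigma}\pi_*=F(\pi)_*F_{0}$. Functoriality of transport then gives $\langle\id,a\rangle_*F(\pi)_*=\id_*=\mathrm{Id}$. Hence
\[
 \Phi(F,a)\,p=e_\D F_0\, e_\C^{-1}.
\]
We identify this with $F$ using the uniqueness of factorizations through $e_\D$: both $F e_\C$ and $e_\D F_0$ are strict maps from the contextual $\cs\C{0_\C}$ into $\D$, and they agree on generators, so they coincide. For the term, the generic term $v$ is sent by $F_{0.\sigma}$ to the variable of $0_\D.F(\sigma)$. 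Transport acts on terms through the comparison map, which at the empty telescope is $\langle\id,a\rangle$ itself, so the variable goes to $a$. Thus $\Phi(F,a)(v)=a$.

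\emph{Second composite (main obstacle).} We must show $\Phi(Hp,H(v))=H$ for every strict $H:\C.\sigma\to\D$. The approach is to exploit contextuality. Every object of $\C.\sigma$ is a telescope over $0.\sigma$, and every type, term and morphism is generated from $p$-images of data of $\C$ together with $v$ by term-classifier operations and the type and term formers. A strict morphism is therefore determined by its restriction along $p$ together with its value on $v$. Concretely, we argue by induction on telescope length, showing that $H$ and $\Phi(Hp,H(v))$ agree on contexts, on types and terms at each telescope, and on morphisms; the morphisms are the unique maps $\langle f,a\rangle$ induced by the universal property of term classifiers, and both functors preserve that property. The delicate point is the base of the induction. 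A type over $0.\sigma$ is a type of $\C$ at $0_\C.\sigma$, and that type of $\C$ need not lie in the image of $p$. We therefore cannot compare the two maps on it by generators alone. Instead we transport through the identification of $\C$ with $\cs\C{0_\C}$ and use that $0_\C.\sigma$ is the one-step telescope $(\sigma)$. Under the concatenation isomorphism, a type at $(\sigma)$ in $\C.\sigma$ corresponds to a type at the one-element telescope in $\cs\C{0_\C}\cong\C$. Its image under any strict $H$ is then forced: it is the $H$-image of that type of $\C$, substituted along the term $H(v)$, i.e.\ $\langle\id,H(v)\rangle$ applied to $Hp$ of it. This bookkeeping, made rigorous with the concatenation isomorphism of coslices, is where I expect the real work to lie.
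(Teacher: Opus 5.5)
Your inverse $\Phi(F,a)=e_\D\circ\langle\id,a\rangle_*\circ F_{0_\C.\sigma}$ is correct, and so is the verification $\Phi(F,a)\,p=F$, $\Phi(F,a)(v)=a$. This already gives a self-contained account of the existence half of what the paper simply cites from \citep[Lemma 25]{BKS}. One small correction: $e_\D F_{0_\C}=Fe_\C$ is naturality of evaluation, checked by unfolding the definitions. It does not follow from uniqueness of factorizations through $e_\D$, which only concerns maps into $\cs\D{0_\D}$.

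The gap is the second composite, which is the real content of the proposition. You assert that a strict $H$ must send $\rho\in\Ty_\C(0_\C.\sigma)$ to $\langle\id,H(v)\rangle\bigl((Hp)(\rho)\bigr)$. That is the right formula, but you give no reason for it, and the concatenation isomorphism by itself does not supply one. The difficulty is also not confined to the base case. A type at any telescope $T$ over $0_\C.\sigma$ is an arbitrary type of $\C$ at $0_\C.\sigma.T$, not a $p$-image, so the inductive step meets the same obstruction and the "generators" induction never gets started.

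The missing idea is a contraction identity inside $\C.\sigma$. Let $p':0_\C.\sigma\to0_\C.\sigma.\pi(\sigma)$, with generic term $v'$, be the term classifier. Let $\pi_{(\sigma)}=\langle p'\pi,v'\rangle$ be the comparison map through which $p$ acts on types over $0_\C.\sigma$. Let $d=\langle\id,v\rangle:0_\C.\sigma.\pi(\sigma)\to0_\C.\sigma$; this is exactly the morphism $\langle\id,v\rangle:0_{\C.\sigma}.p(\sigma)\to0_{\C.\sigma}$ of $\C.\sigma$.
\begin{itemize}
\item Uniqueness in the term-classifier property gives $d\,\pi_{(\sigma)}=\id$.
\item Hence $\rho=\langle\id,v\rangle(p(\rho))$ in $\C.\sigma$, and similarly for terms.
\item Applying $H$, which satisfies $H\langle\id,v\rangle=\langle\id,H(v)\rangle$ by strictness, yields your formula.
\end{itemize}

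To cover all telescopes at once, argue with coslice functors rather than elementwise.
\begin{itemize}
\item Naturality of evaluation gives $H=e_\D H_{0_{\C.\sigma}}e_{\C.\sigma}^{-1}$.
\item The equation $(Hp)_{0_\C.\sigma}=H_{p(0_\C.\sigma)}\,p_{0_\C.\sigma}$, together with the compatibility square for $H$ along $\langle\id,v\rangle$, gives $\Phi(Hp,H(v))=e_\D H_{0_{\C.\sigma}}\langle\id,v\rangle_*\,p_{0_\C.\sigma}$.
\item It therefore suffices that $\langle\id,v\rangle_*\,p_{0_\C.\sigma}=e_{\C.\sigma}^{-1}$.
\item Under the concatenation isomorphisms, $p_{0_\C.\sigma}$ becomes $(\pi_{(\sigma)})_*$, $\langle\id,v\rangle_*$ becomes $d_*$, and $e_{\C.\sigma}^{-1}$ becomes the identity. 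The required equation thus reads $(d\,\pi_{(\sigma)})_*=\id_*$.
\end{itemize}
This is the step your sketch leaves as ``bookkeeping''; without it the second composite is only asserted.
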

\section{Type classifiers and the big model}\label{sec:type-classifiers}

Our goal is to give an interpretation of a type theory in which contexts can be extended by free type variables.
The precise presentation depends on the other rules of the type theory, but one possible presentation is by the following rule set:
\[
 \begin{gathered}
 \frac{\vdash\Gamma\ \mathsf{ctx}}
      {\vdash\Gamma,X\ \mathsf{type}\ \mathsf{ctx}}
 \qquad
 \frac{\vdash\Gamma\ \mathsf{ctx}}
      {\Gamma,X\ \mathsf{type}\vdash X\ \mathsf{type}}
 \\[1.5ex]
 \frac{\Gamma\vdash\mathcal J}
      {\Gamma,X\ \mathsf{type}\vdash\mathcal J}
 \qquad
 \frac{\Gamma,X\ \mathsf{type},\Delta(X)\vdash\mathcal J(X)
       \quad\Gamma\vdash\sigma\ \mathsf{type}}
      {\Gamma,\Delta(\sigma)\vdash\mathcal J(\sigma)}
 \end{gathered}
 .
\]
Here $\mathcal J$ ranges over type and term judgments and their judgmental equalities.
The expressions $\Delta(\sigma)$ and $\mathcal J(\sigma)$ denote replacement of $X$ by $\sigma$, avoiding variable capture.

Regardless of the precise syntactic presentation, we expect such context extensions to be interpreted in covariant cwfs as type classifiers:
\begin{definition}
A \emph{type classifier} at a context $\Gamma$ of a covariant cwf consists of a context $\Gamma[X]$, a morphism $p_\Gamma:\Gamma\to\Gamma[X]$, and a type $X\in\Ty(\Gamma[X])$, with the following universal property: for every $f:\Gamma\to\Delta$ and $\sigma\in\Ty(\Delta)$ there is a unique morphism $\langle f,\sigma\rangle:\Gamma[X]\to\Delta$ such that
\begin{equation}\label{eq:type-laws}
 \langle f,\sigma\rangle p_\Gamma=f,\qquad \langle f,\sigma\rangle(X)=\sigma.
\end{equation}
\end{definition}
The following is our main result.

\begin{theorem}\label{thm:big}
The category $\Cxl$ has a covariant cwf structure with
\[
 \Ty_{\Cxl}(\C)=\Ty_\C(0_\C),\qquad
 \Tm_{\Cxl}(\C,\sigma)=\Tm_\C(0_\C,\sigma).
\]
It supports the type and term formers of $\mathbb T$ and type classifiers at all contexts.
\end{theorem}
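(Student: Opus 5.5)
I will build the structure in four layers: the basic cwf data on $\Cxl$, term classifiers via Proposition~\ref{prop:term}, the type and term formers of $\mathbb T$, and finally type classifiers as coproducts with a free model.

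\emph{Basic cwf structure.} Substitution along a strict morphism $F:\C\to\D$ is the action of $F$ on closed types and terms. Since $F$ preserves $0$, it sends $\Ty_\C(0_\C)$ to $\Ty_\D(0_\D)$ and terms accordingly, and functoriality is strict because composition of strict morphisms composes these actions. For the empty context I take the initial object of $\Cxl$. This is the initial model of $\mathbb T$, which exists because $\mathbb T$ is essentially algebraic; I would note that the initial model is contextual, for instance because $e$ applied to it splits uniquely by initiality, so the factorisation through $\cs{\C}{0_\C}$ gives an isomorphism. Term classifiers at $(\C,\sigma)$ are $\C.\sigma$ with $p$ and $v$ as defined above. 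This is contextual, being a contextual coslice, and Proposition~\ref{prop:term} restricted to contextual $\D$ is exactly the universal property of a term classifier.

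\emph{Type and term formers.} Each former of $\mathbb T$ acts on types and terms in context $0_\C$ but may bind variables, e.g.\ $\Pi(\sigma,\tau)$ with $\tau\in\Ty_\C(0_\C.\sigma)$. In the big model, a type over $\C.\sigma$ is a closed type of $\cs{\C}{(0.\sigma)}$, which is a type of $\C$ at $0_\C.\sigma$. So I define the former in $\Cxl$ by applying the former of $\C$, reading premises over extended contexts through this identification and using iterated coslices for telescopes of binders. Stability under substitution follows because strict morphisms preserve formers and term classifiers, together with compatibility of $F_\Gamma$ with transport, which is the commuting square displayed above. The equations of $\mathbb T$ hold because they hold in each $\C$.

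\emph{Type classifiers.} Let $\mathcal U$ be the free contextual model on one closed type $X$. I obtain it as the initial model of $\mathbb T$ extended by a constant closed type, made contextual via $\cs{(-)}{0}$ if necessary. It then satisfies $\Cxl(\mathcal U,\D)\cong\Ty_\D(0_\D)$, naturally in $\D$. I set $\C[X]=\C\sqcup\mathcal U$ with coprojection $p_\C$ and type $X$. The universal property
\[
\Cxl(\C\sqcup\mathcal U,\D)\cong\Cxl(\C,\D)\times\Ty_\D(0_\D)
\]
is exactly the defining property of a type classifier.

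\emph{Main obstacle.} The crux is constructing coproducts in $\Cxl$. I would first form the coproduct in $\Mod$, which exists since $\Mod$ is locally presentable, being essentially algebraic. I then apply the contextual core $\cs{(-)}{0}$, which is right adjoint to the inclusion by the factorisation property of $e$. This right adjoint does not preserve coproducts a priori, so a separate argument is needed that the core of a $\Mod$-coproduct of contextual models is their coproduct in $\Cxl$. The argument: a map from the core into contextual $\D$ should extend uniquely to the $\Mod$-coproduct. To see this, I would show that the coproduct is generated by the images of $\C$ and $\mathcal U$ under formers and context extensions, so every context is iso to a telescope.

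Failing that, I would use reflectivity instead: $\Cxl$ is closed under the relevant limits in $\Mod$ and is itself locally presentable, being models of an essentially algebraic theory with length-indexed contexts. Cocompleteness of $\Cxl$, and hence existence of the coproduct, then follows directly.
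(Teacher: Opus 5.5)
Your proposal is correct and follows the paper's proof. It takes the initial contextual model as the empty context, uses $\C.\sigma$ with Proposition~\ref{prop:term} for term classifiers, and reads off the formers by identifying closed types of iterated coslices with types of $\C$ over telescopes. It sets $\C[X]=\C\amalg\mathcal U$, with $\mathcal U$ and the coproduct supplied by the essentially algebraic presentation of contextual $\mathbb T$-cwfs, as in your fallback.

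Two side remarks. Your worry about the $\Mod$-coproduct is unnecessary: the splitting argument you gave for the initial object applies verbatim to it, and to the initial model of $\mathbb T$ plus a closed type, showing that $e$ is an isomorphism there. The reason is that $\Cxl$ is coreflective in $\Mod$, hence closed under colimits, so no core needs to be taken. Your appeal to reflectivity and closure under limits should be dropped: the inclusion is coreflective, not reflective, and $\Mod$-products of contextual models are generally not contextual. Local presentability alone already gives cocompleteness.
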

\begin{proof}
The empty context is the initial contextual model.
For the term classifier at $\sigma$, take $\C.\sigma$ with the structure map $p$ and generic term $v$ of Proposition~\ref{prop:term}.

By definition of the contextual coslice, we have the literal equalities
\begin{equation}\label{eq:closed}
 \Ty_{\C.\sigma}(0)=\Ty_\C(0_\C.\sigma),\qquad
 \Tm_{\C.\sigma}(0,\tau)=\Tm_\C(0_\C.\sigma,\tau).
\end{equation}
By iterating~\eqref{eq:closed}, types and terms over successive term classifiers in the big model can be identified with types and terms in the corresponding telescopes of $\C$.
The type and term formers can thus be defined using those of $\C$, with the same equations.
Compatibility with substitution in the big cwf follows from compatibility of transport with strict morphisms and their preservation of the type and term formers.

For type classifiers, let $U$ be the free contextual $\mathbb T$-cwf on a single closed type.
Such a free model and coproducts exist because contextual $\mathbb T$-cwfs admit an essentially algebraic presentation~\citep[Section 5.7]{BCDE}.
Put $\C[X]=\C\amalg U$, with $p_\C$ the coproduct inclusion and $X$ the image of the generating type of $U$.
By the coproduct universal property, this is a type classifier at $\C$.
\end{proof}

Now consider isomorphic types $\sigma\cong\tau$ in a context $\Gamma$ of the big model and the induced functors $\bar\sigma,\bar\tau:\Gamma[X]\to\Gamma$, where $\bar\sigma=\langle\mathrm{Id}_\Gamma,\sigma\rangle$ and $\bar\tau=\langle\mathrm{Id}_\Gamma,\tau\rangle$.
Does the isomorphism $\bar\sigma(X)\cong\bar\tau(X)$ extend to isomorphisms $\bar\sigma(\kappa)\cong\bar\tau(\kappa)$ for every type $\Gamma[X]\vdash\kappa\ \mathsf{type}$?
If the type formers of $\mathbb T$ are well-behaved, the answer is yes, by the following theorem.

\begin{theorem}\label{thm:naturality}
Suppose the type and term formers of $\mathbb T$ are functorial with respect to invertible maps between types.
Then for every strict $F:\C\to\D$ between contextual models and closed types $\sigma,\tau$ of $\D$, there is a natural bijection
\[
 \operatorname{Iso}_{F}\bigl(\langle F,\sigma\rangle,\langle F,\tau\rangle\bigr)
 \cong \operatorname{Iso}(\sigma,\tau).
\]
If the type and term formers are functorial with respect to arbitrary maps between types, there is also a natural bijection
\[
 \operatorname{Hom}_{F}\bigl(\langle F,\tau\rangle,\langle F,\sigma\rangle\bigr)
 \cong \operatorname{Hom}(\sigma,\tau).
\]
\end{theorem}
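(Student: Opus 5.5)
The statement concerns $\langle F,\sigma\rangle,\langle F,\tau\rangle:\C[X]\to\D$ with $\C[X]=\C\amalg U$. I read $\operatorname{Iso}_F$ (resp.\ $\operatorname{Hom}_F$) as the set of natural transformations between these functors whose restriction along $p_\C$ is the identity on $F$, and which are compatible with the cwf structure. Compatibility means the component at a context is determined by, and acts on, types and terms. My plan is to reduce the bijection to the $U$-summand and then use the freeness of $U$.

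\textbf{Step 1: the map from transformations to maps of types.} Given such a transformation $\alpha$, restrict it to the closed type $X$. Since $\C[X]$ is contextual, the closed context $0.X$ exists, and $\langle F,\sigma\rangle(0.X)=0_\D.\sigma$. So the component $\alpha_{0.X}:0_\D.\sigma\to0_\D.\tau$ is a morphism over $0_\D$. By the term classifier property this is exactly a term of $\tau$ in context $0_\D.\sigma$, i.e.\ a map $\sigma\to\tau$. For isomorphisms the inverse transformation gives the inverse map. Naturality in $F$ (precomposition with strict morphisms) and in $\sigma,\tau$ is immediate from the definition. Note the direction in the $\operatorname{Hom}$ case: the statement puts $\langle F,\tau\rangle$ first. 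This reflects that the covariant convention reverses arrows, so context maps $0.\sigma\to0.\tau$ correspond to the direction written there. I will fix the convention so that a transformation $\langle F,\tau\rangle\Rightarrow\langle F,\sigma\rangle$ has components $\Delta(\tau)\to\Delta(\sigma)$ in $\C^{\mathrm{op}}$-terms, and check consistency at the end.

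\textbf{Step 2: the inverse map.} Given $f:\sigma\to\tau$, I must build a transformation. First I extend the notion of a morphism of types to all telescopes and types. By induction on the structure of $\C[X]$, I define for each context $\Gamma$ of $\C[X]$ a map $\alpha_\Gamma:\langle F,\sigma\rangle\Gamma\to\langle F,\tau\rangle\Gamma$ in $\D$, together with actions on types over $\Gamma$. The cleanest route is to avoid syntactic induction and use the universal property instead. Form the contextual model $\D^{\to}$ (or $\D^{\cong}$) whose contexts are pairs of contexts of $\D$ with a connecting map, and whose types are pairs of types with a map between their extensions. The hypothesis that type and term formers are functorial in (invertible) maps of types is exactly what makes $\D^{\to}$ a $\mathbb T$-cwf with strict projections $\mathrm{dom},\mathrm{cod}:\D^{\to}\to\D$. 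The identity transformation gives the diagonal $\C\to\D\to\D^{\to}$, and the type $(\sigma,\tau,f)$ is closed in $\D^{\to}$. By the type classifier property of $\C[X]$ (Theorem~\ref{thm:big}, restricted to contextual models), these data yield a unique strict $\langle \Delta F,(\sigma,\tau,f)\rangle:\C[X]\to\D^{\to}$. Composing with $\mathrm{dom}$ and $\mathrm{cod}$ recovers $\langle F,\sigma\rangle$ and $\langle F,\tau\rangle$ by uniqueness, and the connecting maps assemble into the desired transformation restricting to the identity on $F$.

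\textbf{Step 3: bijectivity.} Going from transformations to maps and back is the identity, because $\alpha_{0.X}$ is recovered by construction. For the other direction, start from a transformation $\alpha$ and package it as a strict morphism $\C[X]\to\D^{\to}$. This morphism agrees with the constructed one on $\C$ and on $X$, so by uniqueness in the type classifier property they are equal.

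The main obstacle is Step 2. It requires showing that $\D^{\to}$ (resp.\ its isomorphism variant) is genuinely a contextual $\mathbb T$-cwf with strict projections. In particular, term classifiers and the type formers must be preserved on the nose. Failure of covariance for $\Pi$-types is precisely why arbitrary maps need the extra hypothesis. I would first try defining $\D^{\to}$ as the contextual cwf of telescope-pairs with fibrewise maps, reducing everything to the stated functoriality hypothesis.
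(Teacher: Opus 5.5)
Your proposal follows essentially the paper's proof: build the cotensor $\D^{\to}$ (resp.\ $\D^{\cong}$) of telescope pairs from functoriality of the formers, obtain $\C[X]\to\D^{\to}$ from the diagonal on $\C$ and the closed type given by the map, and invert by evaluating at $X$, with uniqueness from the coproduct property. Two small corrections: packaging a transformation as a strict morphism into $\D^{\to}$ needs the $\mathbb T$-naturality that the paper builds into $\operatorname{Iso}_F$ and $\operatorname{Hom}_F$, not mere cwf-compatibility; and in the covariant convention a morphism $0_\D.\sigma\to0_\D.\tau$ under $0_\D$ is a term of $\sigma$ in context $0_\D.\tau$, so for $\alpha:\langle F,\tau\rangle\Rightarrow\langle F,\sigma\rangle$ the component $\alpha_{0.X}:0_\D.\tau\to0_\D.\sigma$ is exactly a term $x:\sigma\vdash s(x):\tau$, which already matches the stated direction.
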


On the left, $\operatorname{Iso}_{F}$ denotes $\mathbb T$-natural isomorphisms between the indicated functors whose restriction along $p_\C$ is the identity on $F$.
Here $\mathbb T$-naturality means that the comparison for each type former is the one induced by the comparisons of its arguments.
For type formers characterized by universal diagrams of context morphisms, this condition follows from naturality.
On the right, $\operatorname{Iso}(\sigma,\tau)$ denotes invertible maps between types in $\D$, identified with isomorphisms $0_\D.\sigma\cong0_\D.\tau$ of their term classifiers.
Likewise, $\operatorname{Hom}_{F}$ denotes $\mathbb T$-natural transformations restricting to the identity on $F$, and $\operatorname{Hom}(\sigma,\tau)$ denotes context morphisms $0_\D.\tau\to0_\D.\sigma$, or equivalently open terms $x:\sigma\vdash s(x):\tau$.
The reversed order in the second bijection is thus due to our use of covariant rather than contravariant cwfs.

We leave the notion of functoriality of type and term formers informal so as to avoid having to define a general notion of dependent type theory or type former.
Informally, given compatible (invertible) maps between the arguments of a type former, we expect an (invertible) map between the resulting types, and the corresponding term formers should respect these maps.
These actions are assumed to respect the equations of $\mathbb T$ and to commute with substitution and strict morphisms.
In extensional type theory, unit types, finite sums, dependent sums, identity types, and natural numbers are functorial even with respect to noninvertible maps.
Dependent products require invertible maps owing to the mixed variance of their arguments.

Intensional identity types require care because the functoriality laws must hold judgmentally.
For the usual action $\mathsf{ap}$ defined by identity elimination, even $\mathsf{ap}_{\id}(p)\equiv p$ need not hold judgmentally for a variable $p$.
Strictly inverse maps between the underlying types therefore do not by themselves give judgmentally inverse maps on identity types by this construction.
For cubical path types, by contrast, the induced maps typically satisfy the identity and composition laws judgmentally; see, for example,~\citep[Sections 3.1--3.2]{CCHM}.
A similar issue occurs for intensional $W$-types, whose usual computation rules do not imply judgmental uniqueness of recursion.

\begin{proof}[Proof of Theorem~\ref{thm:naturality}]
With $\mathbb T$-natural transformations as $2$-cells, $\Cxl$ is a $2$-category in the arbitrary-map case, and with $\mathbb T$-natural isomorphisms it is a $(2,1)$-category in the invertible case.
The assertion of the theorem is that the coproduct defining a type classifier also has the corresponding $2$-categorical universal property.
For this, it suffices to construct cotensors by the walking (invertible) arrow, as we will see below.
Write $\D^{\rightarrow}$ and $\D^{\cong}$ for these cotensors.
Their universal property identifies strict morphisms $\C\to\D^{\rightarrow}$ with $\mathbb T$-natural transformations of strict morphisms $\C\to\D$, and similarly for $\D^{\cong}$ and $\mathbb T$-natural isomorphisms.

Contexts of $\D^{\cong}$ are pairs of telescopes of the same length, equipped with isomorphisms between corresponding prefixes that commute with the term-classifier maps.
\[
 \begin{CD}
  0_\D @>{p}>> 0_\D.\sigma_1 @>>> \cdots @>{p}>> 0_\D.\sigma_1.\cdots.\sigma_n\\
  @| @V{u_1}V{\cong}V @. @V{u_n}V{\cong}V\\
  0_\D @>{p}>> 0_\D.\tau_1 @>>> \cdots @>{p}>> 0_\D.\tau_1.\cdots.\tau_n
 \end{CD}
\]

Morphisms are commuting squares between the isomorphisms of the full contexts.
A type consists of a type over each endpoint together with an isomorphism of their term classifiers over the given context isomorphism, and a term consists of a compatible pair of terms.
Context extension is defined by appending these data to the two telescopes.
The type and term formers are defined at the endpoints, with the comparison isomorphisms supplied by their functoriality.
For $\D^{\rightarrow}$, the same construction uses maps in place of isomorphisms.

A map $u:\sigma\to\tau$ of closed types in $\D$ defines a closed type in $\D^{\rightarrow}$.
This type and the composite $\C\xrightarrow{F}\D\to\D^{\rightarrow}$ given by the diagonal determine a strict morphism $\C[X]\to\D^{\rightarrow}$, hence a $\mathbb T$-natural transformation from $\langle F,\tau\rangle$ to $\langle F,\sigma\rangle$.
Conversely, a map $\sigma\to\tau$ can be obtained by evaluation at the generic type $X$.
For invertible maps, use $\D^{\cong}$ in place of $\D^{\rightarrow}$.
\end{proof}

Naturality of type classifiers can be internalized with an operator $\ap_{X,x}$ corresponding to the components of $\mathbb T$-natural transformations between type-substitution functors.
Its introduction rule is given by
\begin{equation}\label{eq:ap-rule}
 \frac{\begin{gathered}
 \Gamma\vdash\sigma\ \mathsf{type}\qquad
 \Gamma\vdash\tau\ \mathsf{type}\\
 \Gamma,X\ \mathsf{type}\vdash P(X)\ \mathsf{type}\qquad
 \Gamma,x:\sigma\vdash s(x):\tau
 \end{gathered}}
 {\Gamma,z:P(\sigma)\vdash\ap_{X,x}(P,s,z):P(\tau)}
\end{equation}
in case the type formers of the theory are functorial with respect to all maps, with additional hypotheses corresponding to the existence of a definitional inverse in the invertible case:
\[
 \Gamma,y:\tau\vdash t(y):\sigma\qquad
 \Gamma,x:\sigma\vdash t(s(x))\equiv x:\sigma\qquad
 \Gamma,y:\tau\vdash s(t(y))\equiv y:\tau
\]
Here $X$ is the type variable in $P(X)$, and $x$ is the term variable in $s(x)$.
In $\ap_{X,x}(P,s,z)$, $X$ is bound in $P$ and $x$ is bound in $s$.
Parenthesized arguments indicate substitution.
The intended interpretations of terms $\ap_{X,x}(P,s,z)$ in the big model are the components of $\mathbb T$-natural transformations $\langle\mathrm{Id},\tau\rangle\Rightarrow\langle\mathrm{Id},\sigma\rangle:\C[X]\to\C$ between type-substitution functors given by Theorem~\ref{thm:naturality}.

Evaluation at the generic type and uniqueness in Theorem~\ref{thm:naturality} give identity and composition laws for type families.
These laws are
\[
 \frac{\begin{gathered}
 \Gamma\vdash\sigma\ \mathsf{type}\qquad
 \Gamma\vdash\tau\ \mathsf{type}\\
 \Gamma,X\ \mathsf{type}\vdash P(X)\equiv X\ \mathsf{type}\qquad
 \Gamma,x:\sigma\vdash s(x):\tau
 \end{gathered}}
 {\Gamma,z:\sigma\vdash\ap_{X,x}(P,s,z)\equiv s(z):\tau}
\]
for identity and
\[
 \frac{\begin{gathered}
 \Gamma\vdash\sigma\ \mathsf{type}\qquad
 \Gamma\vdash\tau\ \mathsf{type}\\
 \Gamma,X\ \mathsf{type}\vdash P(X)\ \mathsf{type}\qquad
 \Gamma,Y\ \mathsf{type}\vdash Q(Y)\ \mathsf{type}\qquad
 \Gamma,x:\sigma\vdash s(x):\tau
 \end{gathered}}
 {\Gamma,z:Q(P(\sigma))\vdash
 \ap_{X,x}(Q(P(X)),s,z)\equiv\ap_{Y,u}(Q,\ap_{X,x}(P,s,u),z):Q(P(\tau))}
\]
for composition.
The operator $\ap_{X,x}$ is expected to commute with identity and composition in $s$, corresponding to preservation of identities and composition of $2$-cells in Theorem~\ref{thm:naturality}:
\[
 \begin{gathered}
 \frac{\Gamma,X\ \mathsf{type}\vdash P(X)\ \mathsf{type}
 \qquad \Gamma\vdash\sigma\ \mathsf{type}
 \qquad \Gamma,x:\sigma\vdash s(x)\equiv x:\sigma}
 {\Gamma,z:P(\sigma)\vdash\ap_{X,x}(P,s,z)\equiv z:P(\sigma)}
 \\[1.2em]
 \frac{\begin{gathered}
 \Gamma\vdash\sigma\ \mathsf{type}\qquad
 \Gamma\vdash\tau\ \mathsf{type}\qquad
 \Gamma\vdash\kappa\ \mathsf{type}\\
 \Gamma,X\ \mathsf{type}\vdash P(X)\ \mathsf{type}\\
 \Gamma,x:\sigma\vdash s(x):\tau\qquad
 \Gamma,y:\tau\vdash t(y):\kappa
 \end{gathered}}
 {\Gamma,z:P(\sigma)\vdash\ap_{X,x}\bigl(P,t(s(x)),z\bigr)
 \equiv\ap_{X,y}\bigl(P,t,\ap_{X,x}(P,s,z)\bigr):P(\kappa)}
 \end{gathered}
\]
Naturality with respect to terms is expressed by the following rule:
\[
 \frac{\begin{gathered}
 \Gamma\vdash\sigma\ \mathsf{type}\qquad
 \Gamma\vdash\tau\ \mathsf{type}\\
 \Gamma,X\ \mathsf{type}\vdash P(X)\ \mathsf{type}\\
 \Gamma,X\ \mathsf{type}\vdash r(X):P(X)\qquad
 \Gamma,x:\sigma\vdash s(x):\tau
 \end{gathered}}
 {\Gamma\vdash\ap_{X,x}(P,s,r(\sigma))\equiv r(\tau):P(\tau)}
\]
As usual, the operator is required to respect $\alpha$-conversion of the bound variables $X$ and $x$ and to be congruent with respect to definitional equality in its arguments.

\bibliographystyle{plainnat}
\bibliography{type-classifiers}
\end{document}